\documentclass[conference]{IEEEtran}
\IEEEoverridecommandlockouts
\usepackage[T1]{fontenc}
\usepackage{times}
\usepackage{cite}
\usepackage{amsmath,amssymb,amsfonts,amsthm,gensymb}
\usepackage{algorithmic}
\usepackage{graphicx}
\usepackage{textcomp}
\usepackage{xcolor}
\usepackage{comment}
\usepackage{subfig}

\usepackage{acro}

\newtheorem{lemma}{Lemma}

\DeclareAcronym{MMSE}{
  short = MMSE,
  long  = minimum mean-square-error,
}
\DeclareAcronym{MSE}{
  short = MSE,
  long  = mean-square-error,
}
\DeclareAcronym{UATF}{
  short = UATF,
  long  = use-and-then-forget,
}\DeclareAcronym{UL}{
  short = UL,
  long  = uplink,
}
\DeclareAcronym{MIMO}{
  short = MIMO,
  long  = multiple-input multiple-output,
}
\DeclareAcronym{MR}{
  short = MR,
  long  = maximum-ratio,
}
\DeclareAcronym{BS}{
  short = BS,
  long  = base station,
}
\DeclareAcronym{SINR}{
short = SINR,
long=signal-to-interference-plus-noise ratio
}

\DeclareAcronym{PAS}{
short = PAS,
long=power angular spectra
}

\DeclareAcronym{PSD}{
short = PSD,
long=positive semidefinite,
}

\DeclareAcronym{ASD}{
short = ASD,
long=angular standard deviation,
}

\DeclareAcronym{SE}{
short=SE,
long=spectral efficiency
}

\DeclareAcronym{CSI}{
short=CSI,
long=channel state information
}

\DeclareAcronym{SNR}{
short=SNR,
long=signal-to-noise ratio
}

\def\BibTeX{{\rm B\kern-.05em{\sc i\kern-.025em b}\kern-.08em
    T\kern-.1667em\lower.7ex\hbox{E}\kern-.125emX}}
\begin{document}

\title{Exploiting Pilot Contamination to Improve UL Sum Rate under Maximum Ratio Combining\\
\thanks{This work was supported by the IS24-0190 grant from the Swedish Foundation for Strategic Research.}
}

\author{\IEEEauthorblockN{ Emanuel Wennemo, Ki Won Sung, Emil Björnson}
\IEEEauthorblockA{Department of Communication Systems, KTH Royal Institute of Technology, Stockholm, Sweden \\
Email: \{wennemo, sungkw, emilbjo\}@kth.se}}

\maketitle

\begin{abstract}
In this work, we show that pilot contamination can be exploited to jointly increase the uplink sum rate and reduce pilot overhead. By leveraging the effect of pilot contamination on \ac{MMSE} channel estimates, we demonstrate that allowing users to share or use correlated pilots can, in certain regimes, yield better sum rate than with mutually orthogonal pilots or perfect \ac{CSI} under \ac{MR} combining. To establish this result, we derive the \ac{UATF} capacity bound for an arbitrary pilot design under correlated Rayleigh fading with \ac{MR} combining. We show that, in the presence of pilot contamination, the key mechanism is the directional suppression of the channel estimate towards nearby interfering angular regions, enabling interference reduction. Numerical results confirm this behavior, demonstrating the sum-rate gain and showing that it grows with the number of antennas.   
\end{abstract}
\acresetall
\begin{IEEEkeywords}
Pilot contamination, Sum rate maximization, Pilot design, Correlated Rayleigh fading.
\end{IEEEkeywords}

\section{Introduction}
Pilot contamination arises when users do not have mutually orthogonal pilots, leading to correlation between users' channel estimates \cite{Cell_Free_Book}. Traditionally, this has been studied in Multi-Cell and Cell-Free massive \ac{MIMO} systems \cite{Cell_Free_PC} as an adversary to be mitigated \cite{Survey}, with the single-cell setting explicitly mentioned as an area where the effect is lost \cite{Marzetta_pilotcon}. However, with the emergence of the upper mid-band (7-24 GHz) as a likely candidate for 6G, where higher carrier frequencies lead to larger \ac{MIMO} configurations but a shrinking coherence time and reduced pilot budget \cite{FR3}, we cannot dismiss the single-cell setting entirely. This motivates examining the effect of pilot contamination in this setting, and questioning whether it is truly an adversary.

Extensive work has been done to mitigate pilot contamination, some by assigning users to reuse a limited set of orthogonal pilots \cite{Hungarian,SGPS}, while others consider non-orthogonal pilot design to turn the combinatorial problem into the continuous one \cite{Non_orth}. Across these works, the common objective is to minimize the effect of pilot contamination, either explicitly through \ac{MSE} minimization, or implicitly through user separation. However, in this paper we show that, under \ac{MR} combining, not even a perfect channel estimate is necessarily optimal for the \ac{UL} sum rate, since the channel estimate directly determines the combining vector.

The observation that minimizing MSE is not necessarily optimal under \ac{MR} combining relates to the design of bilinear equalizers \cite{bilinear}, where the \ac{MMSE} channel estimation matrix is replaced by an arbitrary statistic-dependent filter, optimized to maximize the \ac{SINR} in the \ac{UATF} capacity bound \cite{Cell_Free_Book}. Rather than optimizing the filter matrix freely, one could retain the structure of the \ac{MMSE} channel estimation matrix and shape it indirectly through pilot contamination, thereby remaining within the conventional estimation framework while actively reducing pilot overhead. 

In this work, we pursue this approach by exploiting the effect of pilot contamination on \ac{MMSE} channel estimates. Using a non-orthogonal pilot design framework, we highlight how pilot contamination can be leveraged to suppress interference and achieve a larger \ac{UL} sum rate compared to users having mutually orthogonal pilots or even perfect \ac{CSI}, in certain regimes under \ac{MR} and normalized \ac{MR} combining. To enable this study, we derive the \ac{UATF} capacity bound for an arbitrary pilot design under correlated Rayleigh fading with \ac{MR} combining, a result that, to the best of our knowledge, has not been reported in the prior literature. With this, we show that pilot contamination is not necessarily an adversary and can be exploited to jointly improve the sum rate and reduce pilot overhead.

\section{System Model}
We consider the \ac{UL} of a single-cell massive \ac{MIMO} system with $K$ user equipments and a single \ac{BS}. The \ac{BS} consists of $M$ antennas arranged as a horizontal uniform linear array with antenna spacing $d$
in wavelengths, while each user is equipped with a single antenna and transmits simultaneously. Under the assumption of block-flat fading, we model the channel to user $k$ in an arbitrary block as $\mathbf h_k \sim \mathcal{N}_\mathbb{C}(\mathbf 0_M,\mathbf R_k)$ \cite{Corr_fading},
where $\mathcal N_\mathbb C$ denotes the circularly symmetric complex Gaussian distribution and the element $(l,m)$ of the spatial correlation matrix $\mathbf R_k$ is given by
\begin{equation}
[\mathbf{R}_k]_{l,m}=\beta_k\int_{-\infty}^\infty e^{j2\pi (l-m)d  \sin(\overline\varphi_k)}f(\overline \varphi_k)d \overline \varphi_k.
\label{Spat_cor}
\end{equation}
This model assumes a large number of multipath components, where $f(\overline \varphi_k)$ is the probability density function of the angle of arrival, while $\beta_k$ is the total average gain. The second-order statistics, $\mathbf R_k$, are assumed known, justified by their slow variation over time and frequency \cite{SGPS}.

Given that each user $k$ transmits the pilot signal $\boldsymbol  \phi_k\in \mathbb C^{ \tau \times 1}$, with squared norm $\tau$, length $\tau$, and transmit power $\eta_k$, the \ac{BS} receives the signal $\mathbf Y^{p}\in \mathbb C^{M\times \tau}$ expressed as \begin{equation}
\mathbf Y^{p}=\sum_{j=1}^K \sqrt{\eta_j} \mathbf h_j \boldsymbol \phi_j^T+\mathbf N,
\end{equation}
where $\mathbf N\in \mathbb C^{M\times \tau}$ is the receiver noise with each element modeled as $\mathcal N_\mathbb C(0,\sigma^2_{\mathrm{ul}})$ and statistically independent. To estimate $\mathbf h_k$, we multiply the received signal $\mathbf Y^{p}$ with $\boldsymbol \phi^*_k/\sqrt\tau$ and obtain
\begin{equation}
\mathbf y_k^{p}=\sum_{j=1}^K\sqrt{\eta_j}\mathbf h_j  \frac{\boldsymbol\phi_j^T \boldsymbol \phi_k^*}{\sqrt{\tau}}+\frac{\mathbf N \boldsymbol \phi_k^*}{\sqrt \tau},
\label{Projector}
\end{equation}
where we note that $\mathbf N \boldsymbol \phi_k^*/\sqrt \tau \sim \mathcal N_\mathbb{C}(\mathbf 0_M,\sigma^2_{\mathrm{ul}}\mathbf I_M)$. Defining $\boldsymbol{\Gamma}=\boldsymbol \Phi^H\boldsymbol \Phi/\tau$,
 with $\boldsymbol{\Phi}=\begin{bmatrix}
    \boldsymbol\phi_1 & \boldsymbol\phi_2&  ... &\boldsymbol\phi_K
\end{bmatrix}\in \mathbb C^{\tau\times K},$
let us rewrite \eqref{Projector} without restricting the pilot sequences to any specific design:
\begin{equation}
    \mathbf y_{k}^{p}=\underbrace {\sqrt{\eta_k\tau}\mathbf h_{k}}_{\text{Desired part}}+\underbrace{\sum_{j=1,j\neq k}^{K}\sqrt{\eta_j\tau}\mathbf h_{j}\boldsymbol \Gamma_{j,k}}_{\text{Interference}}+\underbrace{\mathbf n_{k}}_{\text{Noise}},
    \label{y_pilot}
\end{equation}
where $\mathbf n_k\sim \mathcal N_{\mathbb C}(\mathbf0_M,\sigma^2_{\mathrm{ul}}\mathbf I_M)$ and $\boldsymbol \Gamma_{j,k}$ is the normalized pilot correlation between user $j$ and $k$, obtained as the $(j,k)$-th entry of $\boldsymbol{\Gamma}$. The \ac{MMSE} channel estimate based on $\mathbf y_{k}^{p}$ is then given by\footnote{When orthogonal pilots are not assumed, $\mathbf y_{k}^{p}$ is no longer a sufficient statistic, and the \ac{MMSE} estimate can instead be based on $\mathbf Y^{p}$, with the drawback of higher complexity. We restrict the channel estimate to $\mathbf y_{k}^{p}$, as using $\mathbf Y^{p}$ only leads to a marginal gain within the considered framework.} 
\begin{equation}
        \hat{\mathbf h}_k=\sqrt{\eta_k\tau}\mathbf R_k\mathbf \Psi^{-1}_k\mathbf y_{k}^{p},
\end{equation}
where $\mathbf \Psi_k=\mathbb E[\mathbf y_{k}^p(\mathbf y_{k}^p)^H]=\sum_{j=1}^{K}\eta_j\tau\mathbf R_{j}|\boldsymbol \Gamma_{j,k}|^2+\sigma_{\mathrm{ul}}^2\mathbf{I}_{M}$. 

\section{Uplink SINR Analysis}
To study and later evaluate how pilot contamination affects performance, we now derive the \ac{UATF} capacity bound for an arbitrary pilot design. An achievable rate [bit/s/Hz] for user $k$, based on the \ac{UATF} bound, can be expressed as \cite{Cell_Free_Book}
\begin{equation}
    R_k=\frac{\tau_c-\tau}{\tau_c}\log_2(1+\mathrm{SINR}_k),
    \label{rate}
\end{equation}
with
{\small
\begin{equation}
\mathrm{SINR}_k = 
\frac{\overbrace{p_k|\mathbb{E}[\mathbf{v}_k^H \mathbf{h}_k]|^2}^{\text{Desired term}}}
{\underbrace{\sum_{i=1}^K p_i \mathbb{E}[|\mathbf{v}_k^H \mathbf{h}_i|^2]
- p_k |\mathbb{E}[\mathbf{v}_k^H \mathbf{h}_k]|^2}_{\text{Interference term}}
+ \underbrace{\sigma^2_{\mathrm{ul}}\mathbb{E}[\|\mathbf{v}_k\|^2]}_\text{Noise term}},
\label{SINR_UATF}
\end{equation}}where $\mathbf{v}_k$ is the combining vector and $(\tau_c-\tau)/\tau_c$ is the fraction of each coherence block that is used for data transmission. We consider \ac{MR} combining, $\mathbf v_k=\mathbf{\hat h}_k$, which implies that the combiner is designed to maximize the desired term in \eqref{SINR_UATF} under \ac{MMSE} estimation, since the channel estimate is orthogonal to the estimation error. Since \ac{MR} combining avoids the matrix inversion required by \ac{MMSE} combining, which becomes increasingly costly as the numbers of users $K$ and antennas $M$ scale, we consider \ac{MR} combining for its low complexity and analytical feasibility. We acknowledge, however, that \ac{MMSE} combining is \ac{SINR}-optimal \cite{Cell_Free_Book}.

\begin{figure*}[!t]
\begin{equation}
\label{SINR_MR}
\mathrm{SINR}_k^{\mathrm{MR}} 
=\frac{\overbrace{p_k|\operatorname{tr}(\mathbf \Upsilon_k)|^2}^{\text{Desired term}}}{\sum_{i=1}^Kp_i\Big(\underbrace{\operatorname{tr}(\mathbf R_i\mathbf \Upsilon_k)}_{\text{Non-coherent interference}}+\underbrace{\eta_i\eta_k \tau^2|\boldsymbol \Gamma_{i,k}|^2|\operatorname{tr}(\mathbf R_i \boldsymbol \Psi_k^{-1}\mathbf R_k)|^2\Big)-p_k|\operatorname{tr}(\mathbf \Upsilon_k)|^2}_{\text{Coherent interference}}+\underbrace{\sigma^2_{\mathrm{ul}}\operatorname{tr}(\mathbf \Upsilon_k)}_\text{Noise term}}
\end{equation}
\hrulefill
\end{figure*}

\begin{lemma}
\label{lemma:MR_SINR}
Under \ac{MR} combining based on \ac{MMSE} channel estimation, the \ac{UATF} \ac{SINR} for user $k$ can be expressed as \eqref{SINR_MR} on top of the next page, where $\mathbf \Upsilon_k=\mathbb E[\hat{\mathbf h}_k\hat{\mathbf h}_k^H]=\eta_k\tau\mathbf R_k \boldsymbol \Psi^{-1}_k\mathbf R_k.$
\end{lemma}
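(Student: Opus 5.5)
We substitute $\mathbf v_k=\hat{\mathbf h}_k=\sqrt{\eta_k\tau}\mathbf R_k\boldsymbol\Psi_k^{-1}\mathbf y_k^p$ into \eqref{SINR_UATF} and evaluate the three required expectations using only second- and fourth-order Gaussian moments. The key structural fact is that, by \eqref{y_pilot}, $\mathbf y_k^p$ is a linear combination of the independent Gaussian vectors $\mathbf h_1,\dots,\mathbf h_K,\mathbf n_k$. Hence, for each $i$, the pair $(\hat{\mathbf h}_k,\mathbf h_i)$ is jointly circularly symmetric Gaussian, with cross-covariance
\begin{equation*}
\mathbb E[\hat{\mathbf h}_k\mathbf h_i^H]=\sqrt{\eta_k\tau}\,\mathbf R_k\boldsymbol\Psi_k^{-1}\sqrt{\eta_i\tau}\,\boldsymbol\Gamma_{i,k}\mathbf R_i .
\end{equation*}
Here we use $\boldsymbol\Gamma_{k,k}=1$, which holds because $\|\boldsymbol\phi_k\|^2=\tau$.

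\textbf{Desired and noise terms.} For $i=k$, the cross-covariance gives $\mathbb E[\hat{\mathbf h}_k^H\mathbf h_k]=\operatorname{tr}(\eta_k\tau\mathbf R_k\boldsymbol\Psi_k^{-1}\mathbf R_k)=\operatorname{tr}(\mathbf\Upsilon_k)$, which is real because $\mathbf\Upsilon_k$ is Hermitian PSD. This yields the desired term. For the noise term, $\mathbb E[\|\hat{\mathbf h}_k\|^2]=\operatorname{tr}(\mathbb E[\hat{\mathbf h}_k\hat{\mathbf h}_k^H])$. Using $\mathbb E[\mathbf y_k^p(\mathbf y_k^p)^H]=\boldsymbol\Psi_k$, this covariance equals $\eta_k\tau\mathbf R_k\boldsymbol\Psi_k^{-1}\boldsymbol\Psi_k\boldsymbol\Psi_k^{-1}\mathbf R_k=\mathbf\Upsilon_k$, so the noise term is $\sigma^2_{\mathrm{ul}}\operatorname{tr}(\mathbf\Upsilon_k)$.

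\textbf{Interference term (main step).} This requires $\mathbb E[|\hat{\mathbf h}_k^H\mathbf h_i|^2]$ for arbitrary $i$, which is where care is needed. We decompose $\mathbf h_i$ along $\hat{\mathbf h}_k$ rather than the reverse. Write $\hat{\mathbf h}_k=\mathbf a_i+\mathbf b_i$, where
\begin{equation*}
\mathbf a_i=\eta_k\tau\sqrt{\eta_i}\,\boldsymbol\Gamma_{i,k}\mathbf R_k\boldsymbol\Psi_k^{-1}\mathbf h_i/\sqrt{\eta_k\tau}\cdot\sqrt{\eta_k\tau}\,\tfrac{1}{\sqrt{\eta_k}}
\end{equation*}
collects the contribution of $\mathbf h_i$, i.e., $\mathbf a_i=\sqrt{\eta_k\eta_i}\,\tau\,\boldsymbol\Gamma_{i,k}\mathbf R_k\boldsymbol\Psi_k^{-1}\mathbf h_i$. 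The remaining part $\mathbf b_i$ depends only on $\{\mathbf h_j\}_{j\neq i}$ and $\mathbf n_k$, so it is independent of $\mathbf h_i$ and zero-mean. (For $i=k$ the same split applies with $\boldsymbol\Gamma_{k,k}=1$.) Expanding $|\mathbf a_i^H\mathbf h_i+\mathbf b_i^H\mathbf h_i|^2$, the cross terms vanish by independence and zero mean. Conditioning on $\mathbf b_i$ gives $\mathbb E[|\mathbf b_i^H\mathbf h_i|^2]=\operatorname{tr}(\mathbf R_i\mathbb E[\mathbf b_i\mathbf b_i^H])$. The term $\mathbf a_i^H\mathbf h_i=c\,\mathbf h_i^H\mathbf A\mathbf h_i$, with $\mathbf A=\boldsymbol\Psi_k^{-1}\mathbf R_k$ and $c=\sqrt{\eta_k\eta_i}\,\tau\,\boldsymbol\Gamma_{i,k}^*$, is a Gaussian quadratic form. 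Its second moment follows from the standard identity
\begin{equation*}
\mathbb E[|\mathbf h^H\mathbf A\mathbf h|^2]=|\operatorname{tr}(\mathbf A\mathbf R)|^2+\operatorname{tr}(\mathbf A\mathbf R\mathbf A^H\mathbf R),
\end{equation*}
valid for $\mathbf h\sim\mathcal N_{\mathbb C}(\mathbf 0,\mathbf R)$ (Isserlis' theorem for circular Gaussians). The first piece gives exactly the coherent term $\eta_i\eta_k\tau^2|\boldsymbol\Gamma_{i,k}|^2|\operatorname{tr}(\mathbf R_i\boldsymbol\Psi_k^{-1}\mathbf R_k)|^2$. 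The second piece equals $\operatorname{tr}(\mathbf R_i\mathbb E[\mathbf a_i\mathbf a_i^H])$.

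\textbf{Recombination.} Adding the two trace pieces gives $\operatorname{tr}(\mathbf R_i(\mathbb E[\mathbf a_i\mathbf a_i^H]+\mathbb E[\mathbf b_i\mathbf b_i^H]))$. Since $\mathbf a_i$ and $\mathbf b_i$ are independent and zero-mean, $\mathbb E[\mathbf a_i\mathbf a_i^H]+\mathbb E[\mathbf b_i\mathbf b_i^H]=\mathbb E[\hat{\mathbf h}_k\hat{\mathbf h}_k^H]=\mathbf\Upsilon_k$, so the sum is $\operatorname{tr}(\mathbf R_i\mathbf\Upsilon_k)$. This recombination is the step to verify carefully, since it is what makes the non-coherent term collapse to a single trace for every $i$. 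Summing $p_i\,\mathbb E[|\hat{\mathbf h}_k^H\mathbf h_i|^2]$ over $i$, subtracting $p_k|\operatorname{tr}(\mathbf\Upsilon_k)|^2$, and inserting the desired and noise terms into \eqref{SINR_UATF} yields \eqref{SINR_MR}.
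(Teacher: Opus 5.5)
Your proof is correct, but it organizes the interference computation differently from the paper. The paper treats $(\mathbf h_i,\hat{\mathbf h}_k)$ as a jointly circularly symmetric Gaussian pair and applies Isserlis's theorem directly to the fourth moment $\mathbb E[\mathbf h_i\mathbf h_i^H\hat{\mathbf h}_k\hat{\mathbf h}_k^H]$. Of the three pairings, one gives $\mathbf R_i\mathbf\Upsilon_k$, the pseudo-covariance pairing vanishes by circular symmetry, and the last gives $\mathbf C_{ik}\operatorname{tr}(\mathbf C_{ik}^H)$ with $\mathbf C_{ik}=\mathbb E[\mathbf h_i\hat{\mathbf h}_k^H]$. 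Hence $\mathbb E[|\hat{\mathbf h}_k^H\mathbf h_i|^2]=\operatorname{tr}(\mathbf R_i\mathbf\Upsilon_k)+|\operatorname{tr}(\mathbf C_{ik})|^2$ follows in one step, and only then is $\mathbf C_{ik}$ evaluated from the pilot model. You instead split $\hat{\mathbf h}_k$ into a component $\mathbf a_i$ that is linear in $\mathbf h_i$ and a remainder $\mathbf b_i$ independent of it. You handle $\mathbf b_i$ by conditioning, apply the quadratic-form moment identity (Isserlis for a single Gaussian vector) to $\mathbf a_i^H\mathbf h_i$, and then recombine via $\mathbb E[\mathbf a_i\mathbf a_i^H]+\mathbb E[\mathbf b_i\mathbf b_i^H]=\mathbf\Upsilon_k$. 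Each step holds, including the identification of $\operatorname{tr}(\mathbf A\mathbf R_i\mathbf A^H\mathbf R_i)$ with $\operatorname{tr}(\mathbf R_i\mathbb E[\mathbf a_i\mathbf a_i^H])$ by cyclic invariance. The paper's route is shorter: it yields the identity $\operatorname{tr}(\mathbf R_i\mathbf\Upsilon_k)+|\operatorname{tr}(\mathbf C_{ik})|^2$ for any jointly circular Gaussian pair, with no recombination bookkeeping. Yours avoids the mixed-conjugate pairings and makes explicit that the coherent term comes solely from the part of the estimate that contains $\mathbf h_i$ through $\boldsymbol\Gamma_{i,k}$. That mirrors the paper's later reading of coherent interference as a pilot-contamination effect. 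One cosmetic point: your first displayed expression for $\mathbf a_i$ is convoluted, although it does simplify to the right value. Just state $\mathbf a_i=\sqrt{\eta_k\eta_i}\,\tau\,\boldsymbol\Gamma_{i,k}\mathbf R_k\boldsymbol\Psi_k^{-1}\mathbf h_i$ directly.
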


\begin{IEEEproof}
The desired and noise terms in \eqref{SINR_UATF} follow directly under \ac{MR} combining by exploiting the orthogonality between the channel estimate and the estimation error, $\mathbb{E}[\mathbf{\hat h}_k^H \mathbf{h}_k]=\mathbb{E}[\mathbf{\hat h}_k^H \mathbf{\hat h}_k]=\operatorname{tr}(\eta_k\tau \mathbf R_k\boldsymbol \Psi^{-1}_k \mathbf R_k)\triangleq\operatorname{tr}(\mathbf \Upsilon_k)$. 
Furthermore, the interference term $\mathbb{E}[|\mathbf{\hat h}_k^H \mathbf{h}_i|^2]$ can be rewritten as
\begin{equation}
    \mathbb{E}[|\mathbf{\hat h}_k^H \mathbf{h}_i|^2]=\operatorname{tr}(\mathbb E[\mathbf h_i \mathbf h_i^H\hat{\mathbf h}_k \hat{\mathbf h}_k^H]).
    \label{Trace_operator}
\end{equation}
Since $\mathbf h_i$ and $\hat{\mathbf h}_k$ are zero-mean complex Gaussian vectors, we expand the fourth-order moment $\mathbb E[\mathbf h_i \mathbf h_i^H\hat{\mathbf h}_k \hat{\mathbf h}_k^H]$ using Isserlis's theorem \cite{isserlis1918formula} element-wise as
\begin{equation}
\begin{split}
&\mathbb E[\mathbf h_i \mathbf h_i^H \hat{\mathbf h}_k \hat{\mathbf h}_k^H]_{a,b}=
\sum_{c=1}^M  \mathbb E[h_i^{(a)}h_i^{*(c)}]\mathbb E[\hat h_k^{(c)}\hat h_k^{*(b)}]+\\
&\mathbb E[h_i^{(a)}\hat h_k^{(c)}]\mathbb E[h_i^{*(c)}\hat h_k^{*(b)}]+\mathbb E[h_i^{(a)}\hat h_k^{*(b)}]\mathbb E[h_i^{*(c)}\hat h_k^{(c)}].
\end{split}
\label{Element_isserlis}
\end{equation}
We notice that $\sum_{c}  \mathbb E[h_i^{(a)}h_i^{*(c)}]\mathbb E[\hat h_k^{(c)}\hat h_k^{*(b)}]$ corresponds to the matrix product $\mathbf R_i\mathbf \Upsilon_k$ and introduce the notation
$\mathbf P_{ik}=\mathbb E[\mathbf h_i\hat {\mathbf h}_k^T],$ $\mathbf C_{ik}=\mathbb E[\mathbf h_i\hat {\mathbf h}_k^H].$ With this, we rewrite \eqref{Element_isserlis} as
\begin{equation}
\mathbb E[\mathbf h_i \mathbf h_i^H \hat{\mathbf h}_k \hat{\mathbf h}_k^H]=\mathbf R_i \mathbf \Upsilon_k +\underbrace{\mathbf P_{ik}\mathbf P_{ik}^H}_{\text{Pseudo-covariance}}+\mathbf C_{ik} \operatorname{tr}(\mathbf C_{ik}^H),
\label {Compact_isserlis}
\end{equation}
where the pseudo-covariance vanishes due to circular symmetry. 
Returning to \eqref{Trace_operator} and applying the trace operator yields
\begin{equation}
\operatorname{tr}(\mathbb E[\mathbf h_i \mathbf h_i^H\hat{\mathbf h}_k \hat{\mathbf h}_k^H])=\operatorname{tr}(\mathbf R_i \mathbf \Upsilon_k)+|\operatorname{tr}(\mathbf C_{ik})|^2.
\end{equation}
To complete the proof, we note that $\mathbf h_k$ and $\mathbf h_i$ are independent if $i\neq k$, implying that $\mathbb E[\mathbf h_i \hat{\mathbf h}_k^H]$ reduces to the single term where $j=i$ in \eqref{y_pilot}, leading to 
\begin{equation}
    |\operatorname{tr}(\mathbf C_{ik})|^2=\eta_k\eta_i\tau^2|\mathbf \Gamma_{i,k}|^2|\operatorname{tr}(\mathbf R_i \mathbf \Psi_k^{-1} \mathbf R_k)|^2,
\end{equation}
which completes the proof. 
\end{IEEEproof}
One should note that pilot contamination not only affects the coherent interference in \eqref{SINR_MR}, but also shapes $\mathbf \Psi_k$, and consequently, $\boldsymbol{\Upsilon}_k$. This implies that the pilot contamination's effect on the \ac{SINR} is nontrivial, motivating a closer examination of its impact on the channel estimate.
\section{Channel Estimation with Pilot Contamination} 
To understand the effect of pilot contamination independently of the combiner choice, we now analyze how it affects the spatial properties of the channel estimate. We begin by reformulating the channel estimate's covariance matrix to explicitly separate the true covariance from the channel estimation error. For notational convenience, let  $\mathbf A_k=\sigma^2_{\mathrm{ul}}\mathbf I_M+\sum_{j\neq k} \eta_j\tau |\boldsymbol \Gamma|_{j,k}^2\mathbf R_j.
$
Since $\mathbf R_k$ is Hermitian \ac{PSD}, its square root matrix $\mathbf R_k^{1/2}$ is unique. Recalling that $\mathbf \Upsilon_k=\eta_k\tau\mathbf R_k\mathbf \Psi_k^{-1}\mathbf R_k$, we factor the $\mathbf R_k$ term in $\mathbf \Psi_k$ as $\mathbf R_k^{1/2}\mathbf I_M \mathbf R_k^{1/2}$ and apply Woodbury's identity to obtain
\begin{align}
\medmuskip=2mu\thickmuskip=3mu
&\mathbf \Upsilon_k =
\eta_k\tau\mathbf R_k(\eta_k\tau\mathbf R_k^{1/2}\mathbf I_M \mathbf R_k^{1/2} +\mathbf A_k)^{-1}\mathbf R_k \nonumber \\
&= \eta_k\tau\mathbf R_k^{1/2}\Big(\mathbf T_k-\eta_k\tau\mathbf T_k(\mathbf I_M+\eta_k\tau\mathbf T_k)^{-1}\mathbf T_k\Big)\mathbf R_k^{1/2},
\label{Rewrite}
\end{align}
where
    $\mathbf T_k\triangleq \mathbf R_k^{1/2}\mathbf A_k^{-1}\mathbf R_k^{1/2}\succeq 0.$ Note that $\mathbf T_k$ is \ac{PSD} since it is a congruence transformation of $\mathbf A_k^{-1}$, which is positive definite.
After further matrix manipulation, we obtain
\begin{align}
\medmuskip=2mu\thickmuskip=3mu
&\eta_k\tau\mathbf R_k^{1/2}\Big(\mathbf T_k-\eta_k\tau\mathbf T_k(\mathbf I_M+\eta_k\tau\mathbf T_k)^{-1}\mathbf T_k\Big)\mathbf R_k^{1/2} \nonumber \\
&= \mathbf{R}_k-\mathbf R_k^{1/2}(\mathbf I_M+\eta_k\tau\mathbf T_k)^{-1}\mathbf R_k^{1/2},
\label{Final_rewrite}
\end{align}which separates the true covariance from the suppression term induced by pilot contamination and noise.

The reformulation \eqref{Final_rewrite} enables further insights.

\begin{lemma}
For user $k$, if the pilot correlation matrices $\hat{\boldsymbol \Gamma}$ and $\check{\boldsymbol \Gamma}$ satisfy $|\hat{\boldsymbol \Gamma}_{j,k}| \geq |\check{\boldsymbol \Gamma}_{j,k}|$ for all $j$, the corresponding \ac{MMSE} channel estimates $\mathbf{ \hat h}_k$ and $\mathbf{\check h}_k$ satisfy
\begin{equation}
    \mathbf a^H\mathbb E[\mathbf{ \hat h}_k\mathbf{ \hat h}_k^H]\mathbf a\leq \mathbf a^H\mathbb E[\mathbf{ \check h}_k\mathbf {\check h}_k^H]\mathbf a,
\end{equation}
where $\mathbf a\in \mathbb C^{M\times 1}$ is an arbitrary vector, implying that increasing pilot contamination cannot increase the power of the channel estimate in any direction.
\label{Monotone_dec}
\end{lemma}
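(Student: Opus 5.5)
The plan is to work directly from the reformulation \eqref{Final_rewrite}, $\mathbf \Upsilon_k=\mathbf R_k-\mathbf R_k^{1/2}(\mathbf I_M+\eta_k\tau\mathbf T_k)^{-1}\mathbf R_k^{1/2}$. Because the first term $\mathbf R_k$ does not depend on the pilot design, the claim $\hat{\mathbf \Upsilon}_k\preceq\check{\mathbf \Upsilon}_k$ in the Loewner order is equivalent to
\begin{equation*}
(\mathbf I_M+\eta_k\tau\hat{\mathbf T}_k)^{-1}\succeq(\mathbf I_M+\eta_k\tau\check{\mathbf T}_k)^{-1}.
\end{equation*}
Congruence by $\mathbf R_k^{1/2}$ preserves the Loewner order, so it suffices to establish this middle inequality. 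The statement for an arbitrary $\mathbf a$ then follows, since $\mathbf a^H\mathbf \Upsilon_k\mathbf a$ is exactly the quadratic form in question.

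The chain of monotonicity steps runs as follows.

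\textbf{Step 1.} Since $\mathbf A_k=\sigma^2_{\mathrm{ul}}\mathbf I_M+\sum_{j\neq k}\eta_j\tau|\boldsymbol\Gamma_{j,k}|^2\mathbf R_j$, each $\mathbf R_j\succeq 0$, and $|\hat{\boldsymbol\Gamma}_{j,k}|^2\geq|\check{\boldsymbol\Gamma}_{j,k}|^2$, we get $\hat{\mathbf A}_k-\check{\mathbf A}_k=\sum_{j\neq k}\eta_j\tau(|\hat{\boldsymbol\Gamma}_{j,k}|^2-|\check{\boldsymbol\Gamma}_{j,k}|^2)\mathbf R_j\succeq 0$. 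Note that $\boldsymbol\Gamma_{k,k}=1$ in both designs, because the pilots have squared norm $\tau$. This is why the $j=k$ term is absent from $\mathbf A_k$, and the hypothesis for $j=k$ is trivially consistent.

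\textbf{Step 2.} Both $\hat{\mathbf A}_k$ and $\check{\mathbf A}_k$ are positive definite, since $\sigma^2_{\mathrm{ul}}>0$. Operator antitonicity of matrix inversion on positive definite matrices then gives $\hat{\mathbf A}_k^{-1}\preceq\check{\mathbf A}_k^{-1}$.

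\textbf{Step 3.} Congruence by the Hermitian $\mathbf R_k^{1/2}$ yields $\hat{\mathbf T}_k\preceq\check{\mathbf T}_k$. Hence $\mathbf I_M+\eta_k\tau\hat{\mathbf T}_k\preceq\mathbf I_M+\eta_k\tau\check{\mathbf T}_k$, and both sides are positive definite.

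\textbf{Step 4.} Inverting once more reverses the order and gives the required middle inequality. Multiplying on both sides by $\mathbf R_k^{1/2}$ and subtracting from $\mathbf R_k$ flips the order again, giving $\hat{\mathbf\Upsilon}_k\preceq\check{\mathbf\Upsilon}_k$.

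The only nontrivial ingredient is the antitonicity of inversion, $\mathbf 0\prec\mathbf X\preceq\mathbf Y\Rightarrow\mathbf Y^{-1}\preceq\mathbf X^{-1}$. I would justify it briefly by writing $\mathbf Y^{-1/2}\mathbf X\mathbf Y^{-1/2}\preceq\mathbf I_M$. This implies its eigenvalues are at most one, so $\mathbf Y^{1/2}\mathbf X^{-1}\mathbf Y^{1/2}\succeq\mathbf I_M$, and hence $\mathbf X^{-1}\succeq\mathbf Y^{-1}$.

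The main obstacle is bookkeeping rather than depth. The inequality is applied twice with sign flips, once through $\mathbf A_k^{-1}$ and once through $(\mathbf I_M+\eta_k\tau\mathbf T_k)^{-1}$, and the final subtraction from $\mathbf R_k$ flips it a third time, so each step must be tracked carefully. One must also note that $\mathbf R_k$ may be singular. This is harmless here because only congruence by $\mathbf R_k^{1/2}$ is used, never its inverse.

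As an alternative check, one could argue directly that $\mathbf\Psi_k$ increases in the Loewner order, so $\mathbf\Psi_k^{-1}$ decreases, and conjugating by $\mathbf R_k$ gives the result immediately. I would mention this as a remark, but present the version based on \eqref{Final_rewrite} to match the paper's framing.
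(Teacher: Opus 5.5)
Your proof is correct and follows the paper's argument step for step. The chain is the same: $\hat{\mathbf A}_k\succeq\check{\mathbf A}_k$, then inversion plus congruence by $\mathbf R_k^{1/2}$ to get $\hat{\mathbf T}_k\preceq\check{\mathbf T}_k$, then a second inversion giving $(\mathbf I_M+\eta_k\tau\hat{\mathbf T}_k)^{-1}\succeq(\mathbf I_M+\eta_k\tau\check{\mathbf T}_k)^{-1}$, and finally congruence and subtraction in \eqref{Final_rewrite}. Your additions (a proof of antitonicity, the observation $\boldsymbol\Gamma_{k,k}=1$, the singular-$\mathbf R_k$ caveat) only fill in details the paper leaves implicit.

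One small wording fix: when $\mathbf R_k$ is singular, the middle inequality only implies $\hat{\mathbf\Upsilon}_k\preceq\check{\mathbf\Upsilon}_k$ rather than being equivalent to it. The implication is all you actually use, so the argument is unaffected.
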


\begin{IEEEproof}
    Since $|\hat{\boldsymbol \Gamma}_{j,k}|^2 \geq |\check{\boldsymbol \Gamma}_{j,k}|^2$ for all $j$, the corresponding matrices satisfy $\hat{\mathbf A}_k\succeq \check {\mathbf A}_k$. Recalling that $\mathbf T_k\triangleq \mathbf R_k^{1/2}\mathbf A_k^{-1}\mathbf R_k^{1/2}$ and that matrix inversion reverses ordering \cite[Cor.~7.7.4]{Horn_Johnson_1985}, we obtain
           $ \hat{\mathbf T}_k\preceq\check{\mathbf T}_k$,
   and it follows that
 \begin{equation}
     (\mathbf I_M+\eta_k\tau\hat{\mathbf T}_k)^{-1}\succeq (\mathbf I_M+\eta_k\tau\check{\mathbf T}_k)^{-1}.
 \end{equation}
Since the suppression term in \eqref{Final_rewrite} is a congruence transform of $ (\mathbf I_M+\eta_k\tau\mathbf T_k)^{-1}$, increasing pilot contamination will always lead to power suppression in all directions. 
\end{IEEEproof}
Due to monotonicity, increasing the matrix $\mathbf A_k$, restricted to the column space $\mathcal R(\mathbf R_k)$, also results in larger suppression. This can be shown by formulating 
\begin{equation}
        \mathbf y^H\mathbf T\mathbf y=\underbrace{(\mathbf R_k^{1/2}\mathbf y)^H}_{\mathbf x^H}\mathbf A_k^{-1} \underbrace{\mathbf R_k^{1/2}\mathbf y}_{\mathbf x},
        \label{T_exp}
    \end{equation}
where $\mathbf y\in \mathbb C^{M\times 1}$ is an arbitrary vector and $\mathbf x \in \mathcal{R}(\mathbf R_k)$.

Beyond the overall increase in suppression, \eqref{T_exp} also reveals that suppression is inherently directional. Since $\mathbf A_k^{-1}$ has the same eigenbasis as $\mathbf A_k$ but with reciprocal eigenvalues, $\mathbf x^H\mathbf A_k^{-1}\mathbf x$ is small when $\mathbf x$ aligns with the dominant interference eigenspace of $\mathbf A_k$. This implies strong suppression in those directions. Since $\mathbf x\in \mathcal R(\mathbf R_k)$, suppression is strongest in the directions where $\mathcal R({\mathbf R_k})$ overlaps with the dominant eigenspace of $\mathbf A_k$, or equivalently, when the user's angular support overlaps with the dominant interference directions. Conversely, when $\mathcal R(\mathbf R_k)$ lies outside of the interference eigenspace, suppression is minimal, which aligns with the general idea of spatial separation mitigating the effect of pilot contamination \cite{Classic_allPC}. The interesting regime lies between the two extremes of full and no overlap. When the angular supports are close but not significantly overlapping, suppression reduces the channel estimate's power in the interferer's angular region while having a comparatively small effect on the estimate's overall power.

This spatial structure can be exploited when assigning or designing pilots. By allowing users to deliberately share or use correlated pilots, the resulting pilot contamination will steer the channel estimate away from the overlapping subspace, thereby shaping the MR combining vector to suppress interference. Since this suppression reduces the desired signal power as well, a net benefit arises only when the interference removed outweighs the desired signal sacrificed. One may notice that the coherent interference in \eqref{SINR_MR} is non-zero when pilot contamination is introduced. However, as long as $\mathbf R_i$ and $\mathbf R_k$ are sufficiently spatially separated, the coherent term is small relative to the non-coherent part. This is due to the square in $|\mathrm{tr}(\mathbf R_i\boldsymbol \Psi^{-1}_k\mathbf R_k)|^2$ making the coherent part more sensitive to spatial separation than the non-coherent interference $\mathrm{tr}(\mathbf R_i\mathbf \Upsilon_k)$. Furthermore, $\mathrm{tr}(\mathbf R_i\mathbf \Upsilon_k)$ is significantly reduced when $\mathbf \Upsilon_k$ is suppressed strongly in directions where $\mathbf R_i$ is strong. This is particularly beneficial when the interferer $i$ has a stronger channel than the desired user $k$, so that the gain from suppressing interference exceeds the loss in desired signal power. In these scenarios, using fewer pilots can achieve a better sum rate than with perfect \ac{CSI} when MR is used, as we exploit the structure of pilot contamination to reduce interference. We will demonstrate this in the next section.

\section{Numerical results}
In this section, we highlight scenarios and regimes where pilot contamination can be exploited to improve the sum rate when using MR combining. The results are based on a uniform distribution of scatterers inside the interval $[\varphi_k-\Delta,\varphi_k+\Delta]$, often referred to as the one-ring model \cite{Corr_fading}. Under this model, the $(l,m)$th element of the spatial correlation matrix \eqref{Spat_cor} for user $k$ is \begin{equation}
[\mathbf{R}_k]_{l,m}=\frac{\beta_k}{2\Delta}\int^{\Delta}_{-\Delta} e^{j2\pi (l-m)d  \sin(\varphi_k+\delta)}d \delta.
\label{Spat}
\end{equation}

We consider $K=3$ users, with user 1 and user 2 at the nominal angles $\varphi_1=0\degree$ and $\varphi_2=90 \degree$ relative to \ac{BS}, each with a \ac{SNR}, $\eta_k\beta_k/\sigma^2_{\mathrm{ul}}$, of $26$ dB. User 3 represents a user farther from the \ac{BS}, with an \ac{SNR} of $9$ dB, and its nominal angle is rotated from $0 \degree$ to $90 \degree$. 

To illustrate how pilot contamination affects the channel estimate, we show the average \ac{PAS} of the channel estimate for the rotating user (user 3) across 1000 Monte Carlo realizations in Fig.~\ref{PSD}, under full pilot sharing, i.e., $\tau=1$, and compare it with the case of perfect \ac{CSI}. The nominal angle of user 3 is set to $45 \degree$, with $\Delta=20\degree$ and $M=16$ antennas, to visualize the case where the desired user's angular support does not overlap with the interferers', while being close enough for there to be significant interference suppression. Since the averaging removes small-scale fading and channel estimation errors, the resulting spectrum isolates the spatial effect of pilot contamination, clearly showing power suppression in interfering directions and retained power in the desired angular interval. This suppression, unachievable with perfect \ac{CSI} under \ac{MR} combining, is precisely the mechanism behind the sum-rate gain, and it is especially impactful when the interferers are strong relative to the desired user.

\begin{figure}[t!]
\vspace{-7pt}
\centerline{\includegraphics[width=0.725\columnwidth]{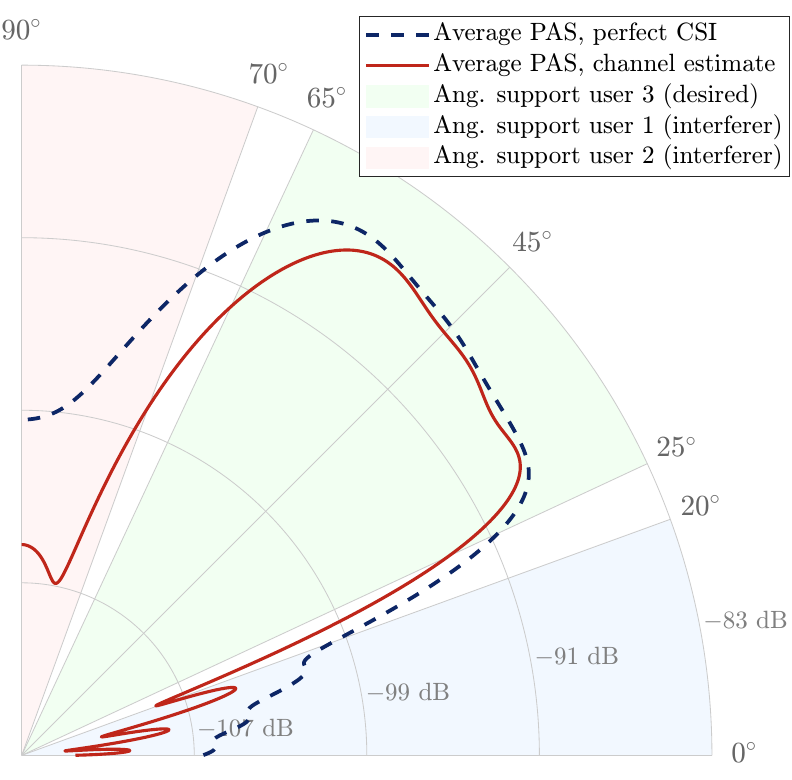}}
\caption{Average \ac{PAS}, in a polar plot, over 1000 Monte Carlo realizations of the desired user's channel estimate, with the nominal angle as $45 \degree$ and the interfering users at $0\degree$ and $90 \degree$.}
\label{PSD}
\vspace{-10pt}
\end{figure}

 In Fig.~\ref{overM}, we show the relative sum-rate gain in percent of full pilot sharing over mutually orthogonal pilots for $M=16$, $M=32$, and $M=64$, with $\Delta=25\degree$. The gain becomes positive when the nominal angle of the rotating user, $\varphi$, exceeds approximately $\Delta$ degrees offset from an interferer, indicating that pilot contamination becomes exploitable once the majority of the angular support of the desired user does not overlap with that of the interferers. To isolate the effect of the SINR on sum rate, the pre-log factor $(\tau_c-\tau)/\tau_c$ in \eqref{rate} was set to 1 in all numerical results. Comparing the relative gain across different numbers of antennas shows that the rate gain is not limited to the small-$M$ regime. To the contrary, increasing $M$ amplifies both the potential gain when pilot contamination suppresses interference and the potential loss when angular supports overlap. Note that the curves over nominal angles are not symmetric around $\varphi=45\degree$, since $\mathbf R_k$ depends on $\sin(\varphi+\delta)$, which gives the angular support a different spatial structure near $\varphi=0\degree$ and near $\varphi=90\degree$.

 \begin{figure}[t!]
 \vspace{-7pt}
\centerline{\includegraphics[width=0.9\columnwidth]{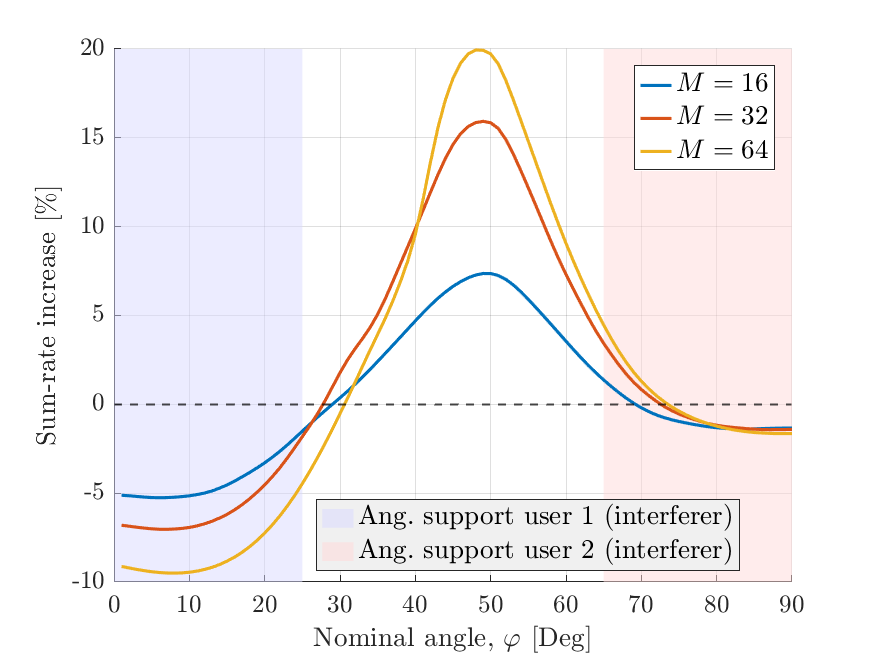}}
\caption{Sum rate increase ($\%$) of full pilot sharing over mutually orthogonal pilots for different numbers of antennas.}
\label{overM}
\vspace{-10pt}
\end{figure}

We now consider three pilot configurations: full pilot sharing ($\tau=1$), partial pilot sharing ($\tau=2$), and mutually orthogonal pilots ($\tau=3$). The pilot allocation for $\tau=2$ is determined by exhaustively searching for the combination that maximizes the sum rate, subject to the constraint that both pilots have to be used to distinguish it from $\tau=1$. Fig.~\ref{colorMap} shows the pilot configuration that achieved the largest sum rate for different $\Delta$ and nominal angles $\varphi$ with $M=64$. In areas where leakage can be significantly suppressed towards user 1 and user 2, it is clear that full pilot sharing outperforms the other configurations. However, when the rotating user is spatially close to only one of the interferers, with little angular overlap, sharing a pilot with just that user is optimal. In regions where the angular support significantly overlaps with an interferer, mutually orthogonal pilots achieve the largest sum rate. We further note that for small $\Delta$ and a nominal angle $\varphi$ far from both interferers, the users are nearly orthogonal in spatial structure, yielding only a marginal gain from pilot sharing.

\begin{figure}[t!]
\centerline{\includegraphics[width=0.9\columnwidth]{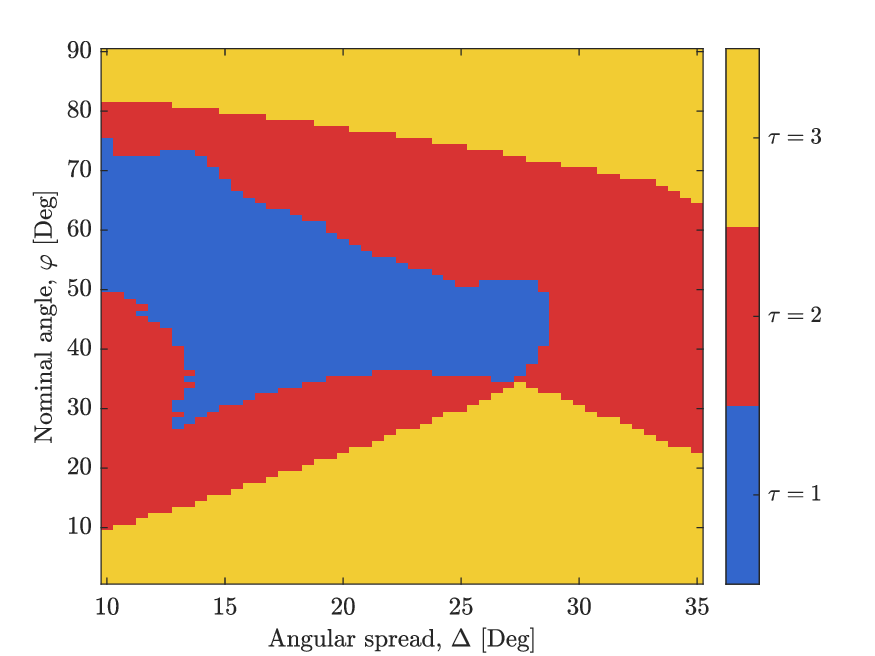}}
\caption{Regions where full pilot sharing ($\tau=1$), partial pilot sharing ($\tau=2$) and mutually orthogonal pilots ($\tau=3$) achieved the largest sum rate.}
\label{colorMap}
\vspace{-10pt}
\end{figure}

Having established that the optimal number of pilots depends on the users' angular supports, we now illustrate this numerically for the specific example with $\Delta=20 \degree$ and $M=16$. In Fig.~\ref{fig:Analytical_opt}, the sum rate is shown over nominal angles with full pilot sharing, partial pilot sharing, and mutually orthogonal pilots. To further validate the results, we show the sum rate with full and partial pilot sharing, averaged over 10 000 Monte Carlo channel realizations, and compare it with perfect \ac{CSI} under \ac{MR} and normalized \ac{MR} (N\ac{MR}) combining in Fig.~\ref{fig:Numerical_opt}. Note that N\ac{MR} combining refers to normalizing the \ac{MR} combiner to unit norm. The labeled regions indicate the pilot configurations that achieved the largest sum rate. Full pilot sharing achieves the largest sum rate when the rotating user is close enough to both user 1 at $0 \degree$ and user 2 at $90\degree$ to suppress leakage in both directions, whereas mutually orthogonal pilots are preferred under strong angular overlap. Between these regimes, partial pilot sharing is optimal. Additionally, in areas where mutually orthogonal pilots are ideal, partial pilot sharing still performs nearly as well, since the rotating user can avoid sharing a pilot with the interfering user that has overlapping angular support, opting instead to share with the user with near-orthogonal spatial correlation. These results numerically demonstrate the different pilot-sharing regimes and how much deliberate pilot contamination can outperform mutually orthogonal pilots or perfect \ac{CSI} in terms of sum rate under \ac{MR} or N\ac{MR} combining.
\begin{figure}[t!]
\vspace{-7pt}
\centering
\subfloat[Sum rate under \ac{MR} combining.\label{fig:Analytical_opt}]{%
  \includegraphics[width=0.9\columnwidth]{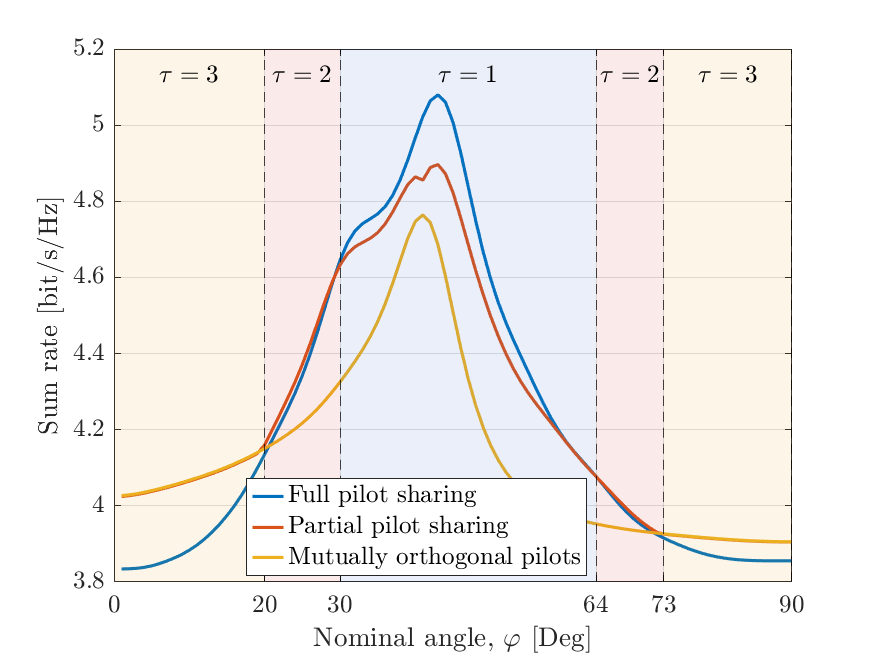}}\\
\vspace{-7pt}
\subfloat[Averaged sum rate over 10 000 Monte Carlo realizations under \ac{MR} and NMR combining.\label{fig:Numerical_opt}]{%
  \includegraphics[width=0.9\columnwidth]{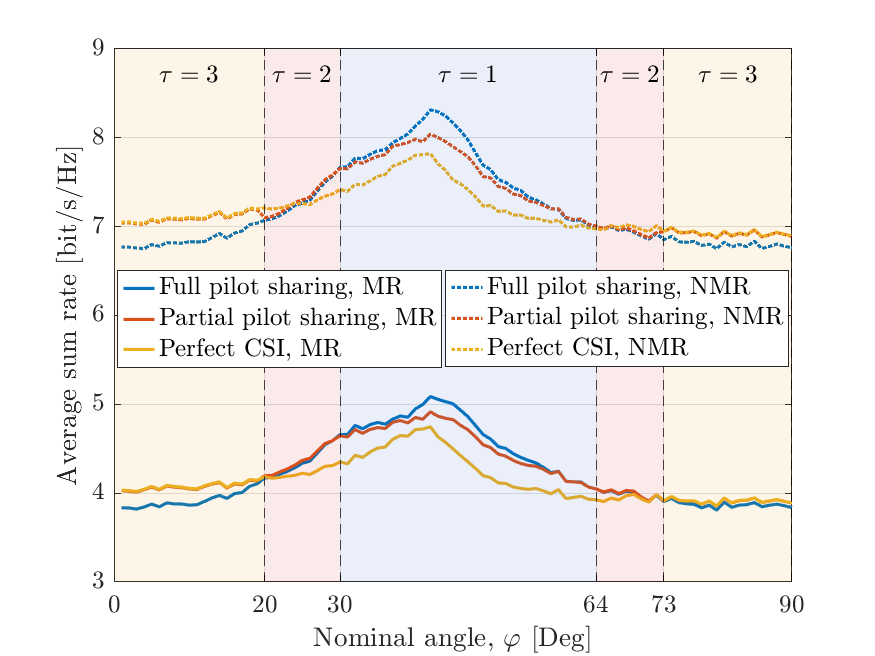}}
\caption{Sum rate over nominal angles for the rotating user, with $M=16$ antennas. Shaded regions indicate the pilot configuration that achieved the largest sum rate.}
\label{fig:opt}
\vspace{-6pt}
\end{figure}
\section{Conclusion}
In this work, we have shown that introducing pilot contamination can increase \ac{UL} sum rate under \ac{MR} combining by shaping \ac{MMSE} channel estimates, thereby jointly reducing pilot overhead and improving performance. In addition, we have derived a general \ac{UATF} \ac{SINR} bound for arbitrary pilots under \ac{MR} combining. We have highlighted that the key mechanism behind the sum rate gain is the directional power suppression of channel estimates, which can significantly reduce interference under \ac{MR} combining. Numerical results demonstrate the regimes where deliberate pilot contamination outperforms mutually orthogonal pilots and perfect \ac{CSI} in terms of sum rate, and show that the gain increases with the number of antennas. In more general scenarios with more users, at least some users will likely have angular support that enables this exploitation, suggesting that understanding how pilot contamination can be leveraged in this way can help reduce pilot overhead. Future work includes extending the framework to power allocation and MMSE combining.
\bibliographystyle{IEEEtran}
\bibliography{IEEEabrv,references}
\end{document}